\title{The $\alpha$-divergences associated with a pair of strictly comparable quasi-arithmetic means}
\author{Frank Nielsen\footnote{E-mail: {\tt Frank.Nielsen@acm.org}}. \\ Sony Computer Science Laboratories Inc.\\ Tokyo, Japan}
\date{}
\def\leftsup#1#2{{}^{#1}{#2}}
\def\calX{\mathcal{X}}
\def\calF{\mathcal{F}}
\def\eqdef{:=}
 \def\st{\ :\ }
\def\calX{\mathcal{X}}
\def\calP{\mathcal{P}}
\def\bbR{\mathbb{R}}
\def\KL{\mathrm{KL}}
\def\dmu{\mathrm{d}\mu}
\def\bbR{\mathbb{R}}
\def\leftsup#1{{{}^{#1}}}
\newtheorem{lemma}{Lemma}
\newtheorem{definition}{Definition}
\newtheorem{theorem}{Theorem}
\def\pow{\mathrm{pow}}
\newtheorem{corollary}{Corollary}
\def\dnu{\mathrm{d}\nu}
\begin{document}
\maketitle

\begin{abstract}
We generalize the family of $\alpha$-divergences using a pair of strictly comparable weighted means.
In particular, we obtain the $1$-divergence in the limit case $\alpha\rightarrow 1$
  (a generalization of the Kullback-Leibler divergence) and the $0$-divergence in the limit case  $\alpha\rightarrow 0$
	 (a generalization of the reverse Kullback-Leibler divergence). 
We state the condition for a pair of quasi-arithmetic means to be strictly comparable, and report the formula for 
the quasi-arithmetic $\alpha$-divergences and its subfamily of bipower homogeneous $\alpha$-divergences which belong to the Csis\'ar's $f$-divergences.
	Finally, we show that these generalized quasi-arithmetic $1$-divergences and $0$-divergences can be decomposed as the sum of generalized cross-entropies minus entropies, and rewritten as conformal Bregman divergences using monotone embeddings.
\end{abstract}

\noindent {Keywords}: Kullback-Leibler divergence, $\alpha$-divergences,  comparable means, weighted quasi-arithmetic means,  $\alpha$-geometry, homogeneous divergences, conformal divergences, geometric divergence, monotone embeddings, conformal flattening.

\section{Introduction}

\subsection{Statistical divergences}
Consider a measurable space $(\calX,\calF)$ (where $\calF$ denotes the $\sigma$-algebra and $\calX$ the sample space)
 equipped with a positive measure $\mu$ (e.g., usually the Lebesgue measure or the counting measure).
The notion of {\em statistical dissimilarity}~\cite{Basseville-2013} $D[P:Q]=D_\mu[p_\mu:q_\mu]$ 
between two arbitrary probability measures with Radon-Nikodym (RN) densities $p_\mu=\frac{\mathrm{d}P}{\dmu}$ and $q_\mu=\frac{\mathrm{d}Q}{\dmu}$ with respect to $\mu$ is at the core of many algorithms in signal processing, information theory, information fusion, and machine learning among others. 
When those statistical dissimilarities are smooth, they are called {\em divergences}~\cite{IG-2016} in the literature.
The most renown statistical divergence rooted in information theory~\cite{CoverThomasIT-2012} is the {\em Kullback-Leibler divergence} (KLD):
\begin{equation}
\KL_\mu[p_\mu:q_\mu] := \int_\calX   p_\mu(x)\log\frac{p_\mu(x)}{q_\mu(x)} \dmu(x).
\end{equation}
Since the KLD is independent of the reference measure $\mu$, i.e., $\KL_\mu[p_\mu:q_\mu]=\KL_\nu[p_\nu:q_\nu]$ for 
$p_\mu=\frac{\mathrm{d}P}{\dmu}$ and $q_\mu=\frac{\mathrm{d}Q}{\dmu}$, and $p_\nu=\frac{\mathrm{d}P}{\dnu}$ and $q_\nu=\frac{\mathrm{d}Q}{\dnu}$ the RN derivatives with respect to another positive measure $\nu$, we write concisely in the remainder:
\begin{equation}
\KL[p:q]=\int p\log\frac{p}{q}\dmu,
\end{equation}
instead of $\KL_\mu[p_\mu:q_\mu]$.
 
The KLD belongs to a parametric family of {\em $\alpha$-divergences}~\cite{alphadiv-2010} $I_\alpha[p:q]$ for $\alpha\in\bbR$:
\begin{equation}
I_\alpha[p:q] := \left\{
\begin{array}{ll}
\frac{1}{\alpha(1-\alpha)}  \left(1-\int p^\alpha q^{1-\alpha}\dmu\right), & \alpha\in\bbR\backslash\{0,1\} \label{eq:alphanormalized}\\
I_1[p:q]=\KL[p:q], & \alpha=1\\
I_0[p:q]=\KL[q:p],& \alpha=0
\end{array}
\right..
\end{equation}

The $\alpha$-divergences extended to positive densities (not necessarily normalized) play a central role in information geometry~\cite{IG-2016}:
\begin{equation}
I_\alpha[p:q] := \left\{
\begin{array}{ll}
\frac{1}{\alpha(1-\alpha)}\int \left(\alpha p+(1-\alpha)q-p^\alpha q^{1-\alpha}\right)\dmu, & \alpha\in\bbR\backslash\{0,1\} \label{eq:alpha}\\
I_1[p:q]=\KL_e[p:q], & \alpha=1\\
I_0[p:q]=\KL_e[q:p],& \alpha=0
\end{array}
\right.,
\end{equation}
where $\KL_e$ denotes the {\em extended Kullback-Leibler divergence}:
\begin{equation}
\KL_e[p:q] := \int \left(p\log\frac{p}{q}+q-p\right)\dmu.
\end{equation}

The $\alpha$-divergences are asymmetric for $\alpha\not=0$ (i.e., $I_\alpha[p:q]\not=I_\alpha[q:p]$ for $\alpha\not=0$) but exhibit the following {\em reference duality}~\cite{Zhang-2004}: 
\begin{equation}
I_\alpha[q:p]=I_{1-\alpha}[p:q]=(I_\alpha)^*[p:q],
\end{equation}
where we   denoted by $D^*[p:q]\eqdef D[q:p]$, the {\em reverse divergence} for an arbitrary divergence $D$ 
(e.g., $I_\alpha^*[p:q]=I_\alpha[q:p]=I_{1-\alpha}[p:q]$).

The $\alpha$-divergences belong to the family of Csiz\'ar's $f$-divergences~\cite{Csiszar-1967} which are defined for a convex function $f$ satisfying by $f(1)=0$ by:
\begin{equation}
I_f[p:q] \eqdef  \int p f\left(\frac{q}{p}\right) \dmu.
\end{equation}

We have
\begin{equation}
I_\alpha[p:q] = I_{f_\alpha}[p:q],
\end{equation}
with
\begin{equation}
f_\alpha(u)=\left\{
\begin{array}{ll}
\frac{1}{\alpha(1-\alpha)} (\alpha+(1-\alpha)u-u^{1-\alpha}), & \alpha\in \alpha\in\bbR\backslash\{0,1\}\\
u-1-\log u, & \alpha=1\\
1-u+u\log u, & \alpha=-1
\end{array}
\right.
\end{equation}

In information geometry, $\alpha$-divergences (and more generally $f$-divergences) are {\em invariant divergences}~\cite{IG-2016}, and it is customary to rewrite the $\alpha$-divergences using $\alpha_A=1-2\alpha$ (i.e., $\alpha=\frac{1-\alpha_A}{2}$). 
Thus the {\em extended $\alpha_A$-divergence} is defined by 

\begin{equation}
\hat{I}_{\alpha_A}[p:q]=\left\{
\begin{array}{ll}
\frac{4}{1-\alpha_A^2}\int \left(\frac{1-\alpha_A}{2} p+\frac{1+\alpha_A}{2}q-p^{\frac{1-\alpha_A}{2} } q^{\frac{1+\alpha_A}{2} }\right)\dmu, & \alpha\in\bbR\backslash\{-1,1\} \label{eq:alphaA}\\
\hat{I}_1[p:q]=\KL_e[p:q], & \alpha=1\\
\hat{I}_{-1}[p:q]=\KL_e[q:p],& \alpha=-1
\end{array}
\right.,
\end{equation}
and the reference duality is expressed by $\hat{I}_{\alpha_A}[q:p]=\hat{I}_{-\alpha_A}[p:q]$.

A statistical divergence $D[\cdot:\cdot]$ when evaluated on densities belonging to a given parametric family $\calP=\{p_\theta \st \theta\in\Theta\}$ of densities are equivalent to a corresponding {\em contrast function}~\cite{Eguchi-1992}:
\begin{equation}
D_\calP(\theta_1:\theta_2) := D[p_{\theta_1}:p_{\theta_2}].
\end{equation}
Although quite confusing, those contrast functions have also been called recently divergences in the literature~\cite{IG-2016}.
Thus to disambiguate whether the divergence is a statistical divergence or a parameter divergence (i.e., contrast function), we choose to use the brackets for encapsulating arguments in statistical divergences and the parenthesis to encapsulate parameter arguments in divergences which are contrast functions.
 
A smooth divergence $D(\theta_1:\theta_2)$ induces a dualistic structure in information geometry~\cite{IG-2016}.
For example, the KLD on the family $\Delta$ of probability mass functions on a finite alphabet $\calX$  with equivalent contrast function a Bregman divergence induces a {\em dually flat space}~\cite{IG-2016}.
More generally, the $\alpha_A$-divergences on the probability simplex $\Delta$ induce the {\em $\alpha_A$-geometry}. 
 
The $\alpha$-divergences are widely used in information sciences, 
see~\cite{integrationalpha-2007,alphadiv-NMF-2008,alphadiv-malapportionment-2018,alphadiv-harmonicBayesian-2019,alphadiv-dictlearning-2019,alphadiv-exptest-2019} just to cite a few applications.
The singly-parametric $\alpha$-divergences have also been generalized to  bi-parametric families of divergences
 like the $(\alpha,\beta)$-divergences~\cite{IG-2016} or the $\alpha\beta$-divergences~\cite{sarmiento2019centroid}.

In this work, based on the observation that the term $\alpha p+(1-\alpha)q-p^\alpha q^{1-\alpha}$ (in the extended $I_\alpha[p:q]$  divergence for $\alpha\in(0,1)$ of Eq.~\ref{eq:alpha}) is a difference between a 
weighted arithmetic mean  $A_{1-\alpha}(p,q):=\alpha p+(1-\alpha)q$ and a weighted geometric mean $G_{1-\alpha}(p,q):=p^\alpha q^{1-\alpha}$, we investigate a generalization of $\alpha$-divergences with respect to  a pair of abstract strictly comparable weighted means~\cite{ConvexFunction-2006}.

\subsection{Divergences and decomposable divergences}\label{sec:dec}
A statistical divergence $D$ shall satisfy the following two axioms: 
\begin{description}
\item[D1. non-negativity.] $D[p:q]\geq 0$ for all densities $p$ and $q$,
\item[D2. identity of indiscernibles.]   $D[p:q]=0$ if and only if $p=q$ $\mu$-almost everywhere.
\end{description}
These axioms are a subset of the metric axioms since we do not consider the symmetry axiom nor the triangular inequality axiom of metric distances.
See~\cite{probabilitymetric-2002} for some common examples of {\em probability metrics} (e.g., total variation or Wasserstein metrics). 

A divergence $D[p:q]$ is said {\em decomposable}~\cite{IG-2016} when it can be written as an integral of a scalar divergence $d(\cdot,\cdot)$:
\begin{equation}
D[p:q] = \int d(p(x):q(x))\dmu(x),
\end{equation}
or $D[p:q] = \int d(p:q)\dmu$ for short.

The $\alpha$-divergences are decomposable divergences since we have
\begin{equation}
I_\alpha[p:q] = \int i_\alpha(p(x):q(x)) \dmu
\end{equation}
with the following {\em scalar $\alpha$-divergence}:
\begin{equation}
i_\alpha(a:b) := \left\{
\begin{array}{ll}
\frac{1}{\alpha(1-\alpha)}  \left(\alpha a+(1-\alpha)b - a^\alpha b^{1-\alpha}\right), & \alpha\in\bbR\backslash\{0,1\} \label{eq:scalaralphanormalized}\\
i_1(a:b)=  a\log\frac{a}{b}+b-a  & \alpha=1\\
i_0(a:b)=i_1(b:a),& \alpha=0
\end{array}
\right.
\end{equation}


\subsection{Contributions and paper outline}
The outline of the paper and the contributions are summarized as follows:

We define the generic $\alpha$-divergences in \S\ref{sec:genalphadiv} for two families of strictly comparable means (Definition~\ref{def:genalphadiv}).
Then section~\ref{sec:qamalphadiv} reports a closed-form formula (Theorem~\ref{thm:qamalpha}) for the quasi-arithmetic $\alpha$-divergences induced by two strictly comparable quasi-arithmetic means with monotonically increasing generators $f$ and $g$ such that $f\circ g^{-1}$ is {\em strictly convex and differentiable}.
In \S\ref{sec:limitdiv}, we study  the divergences $I_0$ and $I_1$ obtained in the limit cases when $\alpha\rightarrow 0$ and $\alpha\rightarrow 1$, respectively.
We obtain generalized Kullback-Leibler divergences when $\alpha\rightarrow 1$ and generalized reverse Kullback-Leibler divergences when $\alpha\rightarrow 0$, which can be factorized as generalized cross-entropies minus entropies.
In \S\ref{sec:repBreg}, we show how to express these generalized $I_1$-divergences and $I_0$-divergences as  conformal Bregman representational divergences and briefly explain their induced conformally flat statistical manifolds.
Section~\ref{sec:rspower} explicits the subfamily of bipower homogeneous $\alpha$-divergences which belong to the family of Csisz\'ar $f$-divergences~\cite{Csiszar-1967}.
Finally, Section~\ref{sec:concl} summarizes the work and present several opportunities for future research directions.

\section{The $\alpha$-divergences based on a pair of means}\label{sec:genalphadiv}

\subsection{The abstract $(M,N)$ $\alpha$-divergences}\label{sec:mnalphadiv}
The point of departure for generalizing the $\alpha$-divergences is to rewrite Eq.~\ref{eq:alpha} for $\alpha\in\bbR\backslash\{0,1\}$ as 

\begin{equation}
I_\alpha[p:q] = \frac{1}{\alpha(1-\alpha)}\int \left(A_{1-\alpha}(p:q)-G_{1-\alpha}(p:q)\right)\dmu,
\end{equation} 
where $A_\lambda$ and $G_\lambda$ for $\lambda\in(0,1)$ stands for the weighted arithmetic and geometric means, respectively:
\begin{eqnarray*}
A_\lambda(x,y)&=& (1-\lambda)x+\lambda y,\\
G_\lambda(x,y)&=& x^{1-\lambda}y^\lambda.
\end{eqnarray*}
We choose the convention $A_0(x,y)=x$ and $A_1(x,y)=1$ so that $\{A_\lambda(x,y)\}_{\lambda\in[0,1]}$ smoothly interpolates between $x$ ($\lambda=0$) and $y$ ($\lambda=1$).

In general, a {\em mean} $M(x,y)$ aggregates two values $x$ and $y$ of an interval $I$
to produce an intermediate quantity which satisfies the {\em innerness property}~\cite{Bullen-2013}: 
\begin{equation}\label{eq:absm}
\min\{x,y\} \leq M(x,y) \leq \max\{x,y\},\quad \forall x,y\in I.
\end{equation}
A mean is said {\em strict} if the inequalities of Eq.~\ref{eq:absm} are strict whenever $x\not=y$.
A mean $M$ is said {\em reflexive} iff $M(x,x)=x$ for all $x\in I$.
In the remainder, we consider $I=(0,\infty)$.
By using the unique dyadic representation of any real $\lambda\in(0,1)$ (i.e., $\lambda=\sum_{i=1}^\infty \frac{d_i}{2^i}$ with $d_i\in\{0,1\}$, a binary digit) , one can build a {\em weighted mean} $M_\lambda$ from any given mean $M$, see~\cite{ConvexFunction-2006} for such a construction.

By analogy to the $\alpha$-divergences, let us define
 the (decomposable) {\em $(M,N)$ $\alpha$-divergences} for a pair of weighted means $M_{1-\alpha}$ and $N_{1-\alpha}$ for $\alpha\in (0,1)$  as
\begin{equation}\label{eq:genalphadiv}
I_\alpha^{M,N}[p:q] \eqdef \frac{1}{\alpha(1-\alpha)} \int\left (M_{1-\alpha}(p:q)-N_{1-\alpha}(p:q)\right)\dmu.
\end{equation} 

The ordinary $\alpha$-divergences for $\alpha\in(0,1)$ are recovered as the $(A,G)$ $\alpha$-divergences:
\begin{equation}
I_\alpha^{A,G}[p:q] = \frac{1}{\alpha(1-\alpha)}\int\left (A_{1-\alpha}(p:q)-G_{1-\alpha}(p:q)\right)\dmu
 =I_{1-\alpha}[p:q]=I_\alpha[q:p]=I^*_\alpha[p:q].
\end{equation}

In order to define   generalized $\alpha$-divergences satisfying the axioms D1 and D2 of proper divergences, we need to characterize the class of acceptable means.
We give a definition strengthening the notion of comparable means in~\cite{ConvexFunction-2006}:

\begin{definition}[Strictly comparable weighted means]
A pair $(M,N)$ of means are said {\em strictly comparable} whenever $M_\lambda(x,y)\leq N_\lambda(x,y)$ for all $x,y\in (0,\infty)$ with equality if and only if $x=y$, and for all $\lambda\in(0,1)$.
\end{definition}

For example, the inequality of the arithmetic and geometric means states that $A(x,y)\geq G(x,y)$: 
Means $A$ and $G$ are comparable, denoted by  $A\geq G$.
Furthermore, the arithmetic and geometric weighted means are distinct whenever $x\not=y$:
Indeed, consider the equation $(1-\alpha)x+\alpha y=x^{1-\alpha}y^\alpha$ for $x,y>0$ and $x\not =y$.
By taking the logarithm on both sides, we get
\begin{equation}
\log\left( (1-\alpha)x+\alpha y \right) = (1-\alpha)\log x + \alpha \log y.
\end{equation} 
Since the logarithm is a strictly convex function, the only solution is $x=y$.
Thus $(A,G)$ is a pair of strictly comparable weighted means.

For a weighted mean $M$, define $\bar{M}_\alpha(x,y):=M_{1-\alpha}(x,y)$.
We are ready to state the definition of generalized $\alpha$-divergences:

\begin{definition}[$(M,N)$ $\alpha$-divergences]\label{def:genalphadiv}
The $(M,N)$ $\alpha$-divergences $I_\alpha^{M,N}[p:q]$ between two positive densities $p$ and $q$ for $\alpha\in(0,1)$
is defined for a pair of strictly comparable weighted means $M_\alpha$ and $N_\alpha$ with $M_\alpha\geq N_\alpha$ by: 
\begin{eqnarray}
I_\alpha^{M,N}[p:q] &\eqdef& \frac{1}{\alpha(1-\alpha)}\int\left(M_{1-\alpha}(p:q)-N_{1-\alpha}(p:q)\right)\dmu,\\
 &=& \frac{1}{\alpha(1-\alpha)}\int\left(\bar{M}_{\alpha}(p:q)-\bar{N}_{\alpha}(p:q)\right)\dmu.
\end{eqnarray}
\end{definition}

Using $\alpha=\frac{1-\alpha_A}{2}$, we can rewrite this divergence as 
\begin{eqnarray}
\hat{I}_{\alpha_A}^{M,N}[p:q] &\eqdef& \frac{4}{1-\alpha_A^2} 
\int \left(M_{\frac{1+\alpha_A}{2}}(p:q)-N_{\frac{1+\alpha_A}{2}}(p:q)\right)\dmu,\\
 &=& \frac{4}{1-\alpha_A^2}\int\left(\bar{M}_{\frac{1-\alpha_A}{2}}(p:q)-\bar{N}_{\frac{1-\alpha_A}{2}}(p:q)\right)\dmu.
\end{eqnarray}

A weighted mean $M_\alpha$ is said {\em symmetric} iff $M_\alpha(x,y)=M_{1-\alpha}(y,x)$.
When both the weighted means $M$ and $N$ are symmetric, we have the following {\em reference duality}~\cite{Zhang-2004}:
\begin{equation}
I_\alpha^{M,N}[p:q]=I_{1-\alpha}^{M,N}[q:p].
\end{equation}
We consider symmetric means in the remainder.

In the limit cases of $\alpha\rightarrow 0$ or $\alpha\rightarrow 1$, we define the {\em $0$-divergence} $I_0^{M,N}[p:q]$
 and the {\em $1$-divergence} $I_1^{M,N}[p:q]$, respectively, by
\begin{eqnarray}
I_0^{M,N}[p:q] &=& \lim_{\alpha\rightarrow 0} I_\alpha^{M,N}[p:q], \\
I_1^{M,N}[p:q] &=&\lim_{\alpha\rightarrow 1} I_\alpha^{M,N}[p:q] =I_0^{M,N}[q:p],
\end{eqnarray}
provided that those limits exist.

\subsection{The quasi-arithmetic $\alpha$-divergences}\label{sec:qamalphadiv}

A {\em quasi-arithmetic mean} (QAM) is defined  for a continuous and strictly monotonic function $f:I\subset\bbR\rightarrow J\subset\bbR$ as:
\begin{equation}
M^f(x,y) \eqdef f^{-1}\left( \frac{f(x)+f(y)}{2}  \right).
\end{equation}
Function $f$ is called the generator of the quasi-arithmetic mean.
These strict and reflexive quasi-arithmetic means are  also called  Kolmogorov means~\cite{Kolmogorov-1930}, Nagumo means~\cite{Nagumo-1930} or de Finetti means~\cite{deFinetti-1931}, or  quasi-linear means~\cite{Inequalities-1988} in the literature.
These means are called quasi-arithmetic means because they can be interpreted as arithmetic means on the arguments $f(x)$ and $f(y)$: 
\begin{equation}
f(M^f(x,y))=\frac{f(x)+f(y)}{2}=A(f(x),f(y)).
\end{equation}
QAMs are strict, reflexive  and symmetric means. 
 
Without loss of generality, we assume strictly increasing functions $f$ instead of monotonic functions since $M^{-f}=M^f$.
Indeed, $M^{-f}(x,y)=(-f)^{-1}(-f(M_f(x,y)))$ and $((-f)^{-1} \circ (-f)) (u)=u$, the identity function.
Notice that the composition $f_1\circ f_2$ of two strictly monotonic increasing functions $f_1$ and $f_2$ is a strictly monotonic increasing function.
Furthermore, we consider $I=J=(0,\infty)$ in the remainder since we apply these means on positive densities.
Two quasi-arithmetic means $M^f$ and $M^g$ coincide if and only if $f(u)=ag(u)+b$ for some $a>0$ and $b\in\bbR$ see~\cite{Inequalities-1988}.
The quasi-arithmetic means were considered in the axiomatization of the entropies by R\'enyi to define the $\alpha$-entropies (see Eq. 2.11 of~\cite{Renyi-1961}).

By choosing $f_A(u)=u$, $f_G(u)=\log u$ or $f_H(u)=\frac{1}{u}$,  we obtain the   Pythagorean's arithmetic $A$, geometric $G$, and harmonic $H$ means, respectively:
\begin{itemize}
\item the {\em arithmetic mean} (A): $A(x,y)=\frac{x+y}{2}=M^{f_A}(x,y)$, 
\item the {\em geometric mean} (G): $G(x,y)=\sqrt{xy}=M^{f_G}(x,y)$,  and 
\item the {\em harmonic mean} (H): $H(x,y)=\frac{2}{\frac{1}{x}+\frac{1}{y}}=\frac{2xy}{x+y}=M^{f_H}(x,y)$.
\end{itemize}

More generally, choosing $f_{P_r}(u)=u^r$, we obtain  the parametric family of {\em power means} (also called {\em H\"older means}~\cite{Holder-1889}):
\begin{equation}
P_r(x,y)=\left(\frac{x^r+y^r}{2}\right)^{\frac{1}{r}}=M^{f_{P_r}}(x,y),\quad r\in\bbR\backslash\{0\}.
\end{equation}

In order to get a {\em smooth family} of power means, we define the geometric mean  in the limit case of $r\rightarrow 0$:
\begin{equation}
P_0(x,y)=\lim_{r\rightarrow 0} P_r(x,y)=G(x,y)=\sqrt{xy}.
\end{equation}

It is known that the positively homogeneous quasi-arithmetic means, i.e. $M^f(\lambda a,\lambda b)=\lambda M^f(a,b)$ for $\lambda>0$, coincide exactly with the family of power means.
The weighted QAMs are given by
\begin{equation}\label{eq:wqam}
M^f_\alpha(p,q)=f^{-1}\left((1-\alpha)f(p)+\alpha f(q))\right)= f^{-1}\left(f(p)+\alpha (f(q)-f(p))\right)=M^f_{1-\alpha}(q,p).
\end{equation}

The logarithmic mean $L(x,y)$ for $x>0$  and $y>0$:
$$
L(x,y)=\frac{y-x}{\log y-\log x}
$$
is an example of a homogeneous mean (i.e., $L(\lambda x,\lambda y)=\lambda L(x,y)$ for any $\lambda>0$) that is {\em not} a QAM.
Besides the family of QAMs, there exist many other families of means~\cite{Bullen-2013}:
For example, let us mention the {\em Lagrangean means}~\cite{LagrangeanQAM-2007} which intersect with the QAMs only for the arithmetic mean, or a generalization of the QAMs called the the {\em Bajraktarevi\'c means}~\cite{BajraktarevicQAM-2020}.

Let us strengthen a recent theorem of~\cite{WQAM-2010} (Theorem~1, 2010):

\begin{theorem}[Strictly comparable weighted QAMs]\label{thm:compwm}
The pair $(M^f,M^g)$ of quasi-arithmetic means obtained for two strictly increasing generators is strictly comparable provided that $f\circ g^{-1}$ is strictly convex.
\end{theorem}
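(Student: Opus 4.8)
The plan is to reduce the comparison of the two quasi-arithmetic means to a single application of Jensen's inequality for the strictly convex function $h := f\circ g^{-1}$. First I would change variables by setting $u := g(x)$ and $v := g(y)$, so that the weighted quasi-arithmetic mean $M^g_\lambda(x,y) = g^{-1}((1-\lambda)g(x)+\lambda g(y))$ becomes $g^{-1}((1-\lambda)u+\lambda v)$, and I want to compare this with $M^f_\lambda(x,y) = f^{-1}((1-\lambda)f(x)+\lambda f(y)) = f^{-1}((1-\lambda)h(u)+\lambda h(v))$, using $f = h\circ g$. The claim $M^f_\lambda \le M^g_\lambda$ (or the reverse, depending on orientation) is, after applying the strictly increasing map $f$ to both sides, equivalent to the scalar inequality
\begin{equation}
(1-\lambda)h(u)+\lambda h(v) \ \lessgtr\ h\bigl((1-\lambda)u+\lambda v\bigr),
\end{equation}
which is exactly Jensen's inequality for $h$ at the points $u,v$ with weights $1-\lambda,\lambda$. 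Here I should be careful: applying $f$ preserves the direction of the inequality, but then I must un-apply $g$ (equivalently, compare $g^{-1}$ of the two sides), so I need to track that $g^{-1}$ is also increasing; the net effect is that strict convexity of $h$ forces $M^f_\lambda(x,y) \le M^g_\lambda(x,y)$ for all $\lambda\in(0,1)$.

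Next I would handle the equality case, which is the part that upgrades "comparable" to "strictly comparable." Since $h$ is strictly convex, Jensen's inequality is strict whenever $u\ne v$, i.e. whenever $g(x)\ne g(y)$; because $g$ is strictly monotonic hence injective, $g(x)=g(y)$ iff $x=y$. Therefore for $x\ne y$ the scalar inequality above is strict, and since $f^{-1}$ and $g^{-1}$ are strict monotone bijections the resulting inequality between the means is strict as well. Conversely if $x=y$ then both means equal $x$ by reflexivity. This establishes the "equality if and only if $x=y$" clause of Definition~1 for every $\lambda\in(0,1)$, which is precisely the definition of strict comparability.

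The only genuine subtlety — and the step I would be most careful about — is the orientation of the inequality and the role of the weighted-mean construction. The weighted QAM has the clean closed form in Eq.~\ref{eq:wqam}, so for quasi-arithmetic means there is no need to invoke the general dyadic-expansion construction of weighted means; the weights $(1-\lambda,\lambda)$ can be taken to be arbitrary reals in $(0,1)$ directly, and Jensen's inequality holds for all such convex combinations. I should also note explicitly why strict convexity (rather than merely convexity) is exactly what is needed: convexity alone gives $M^f_\lambda\le M^g_\lambda$ but permits equality on a set larger than the diagonal (e.g. if $h$ were affine on a subinterval), so it would only yield comparability, not strict comparability — this is the sense in which the statement strengthens Theorem~1 of~\cite{WQAM-2010}. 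Finally, I would remark that the differentiability of $h$ assumed elsewhere in the paper is not needed here: strict convexity by itself suffices for this theorem, and differentiability will only be used later when deriving the closed-form divergence formula in Theorem~\ref{thm:qamalpha}.
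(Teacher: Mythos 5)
Your overall strategy is sound and is essentially the paper's own argument for the strictness part: apply the strictly increasing map $f$ to both means, observe that $f(M^f_\lambda(x,y))=(1-\lambda)f(x)+\lambda f(y)$ while $f(M^g_\lambda(x,y))=(f\circ g^{-1})\bigl((1-\lambda)g(x)+\lambda g(y)\bigr)$, and invoke (strict) Jensen for $h=f\circ g^{-1}$ at the distinct points $g(x)\neq g(y)$. You go slightly further than the paper in that you derive the non-strict comparability directly from Jensen's inequality, whereas the paper delegates that half to Theorem~1 of~\cite{WQAM-2010} and only uses strict convexity to rule out equality off the diagonal; your version is more self-contained, and your remarks that differentiability is not needed and that the closed form of Eq.~\ref{eq:wqam} obviates the dyadic construction are both correct.

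There is, however, one concrete error: the direction of the inequality. Convexity of $h=f\circ g^{-1}$ gives $h\bigl((1-\lambda)u+\lambda v\bigr)\leq (1-\lambda)h(u)+\lambda h(v)$, i.e.\ $f(M^g_\lambda)\leq f(M^f_\lambda)$, hence $M^f_\lambda\geq M^g_\lambda$ --- the opposite of the $M^f_\lambda\leq M^g_\lambda$ you assert. The sanity check is $f=f_A=\mathrm{id}$, $g=f_G=\log$: then $f\circ g^{-1}=\exp$ is strictly convex and the conclusion must be $A\geq G$, not $A\leq G$. The likely source of the slip is your remark about needing to ``un-apply $g$'': no such step is required, since applying the single increasing map $f$ to both means already reduces the comparison to the scalar Jensen inequality, and no further inversion (which is where a direction could flip) occurs. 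The rest of the argument --- strictness from strict convexity when $g(x)\neq g(y)$, injectivity of $g$, and reflexivity on the diagonal --- is correct, so the fix is a one-line reversal; but the orientation matters downstream, since the $(M,N)$ $\alpha$-divergences are defined as integrals of $M_{1-\alpha}-N_{1-\alpha}$ with the dominating mean first, so the error cannot be waved away as a labeling convention.
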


\begin{proof}
Since $f\circ g^{-1}$  is strictly convex, it is convex, and therefore it follows from 
Theorem~1 of~\cite{WQAM-2010} that $M_\alpha^f\geq M_\alpha^g$ for all $\alpha\in[0,1]$.
Thus the very nice property of QAMs is that $M^f\geq M^g$ implies that $M^f_\alpha\geq M^g_\alpha$ for {\em any} $\alpha\in [0,1]$.

Now, let us consider the equation $M^f_\alpha(p,q)=M^g_\alpha(p,q)$ for $p\not =q$:
\begin{equation}
f^{-1}\left((1-\alpha)f(p)+\alpha f(q)\right)  = g^{-1}\left((1-\alpha)g(p)+\alpha g(q)\right).
\end{equation}
Since $f\circ g^{-1}$ is assumed strictly convex,  and $g$ is strictly increasing, we have $g(p)\not=g(q)$ for $p\not=q$, and we reach the following contradiction: 
\begin{eqnarray}
(1-\alpha)f(p)+\alpha f(q) &=& (f\circ g^{-1})\left((1-\alpha)g(p)+\alpha g(q) \right ),\\
&<& (1-\alpha) (f\circ g^{-1})(g(p)) + \alpha (f\circ g^{-1})(g(q)),\\
&<& (1-\alpha)f(p)+\alpha f(q).
\end{eqnarray}
Thus $M^f_\alpha(p,q)\not=M^g_\alpha(p,q)$ for $p\not =q$, and $M^f_\alpha(p,q)=M^g_\alpha(p,q)$ for $p=q$.
\end{proof}

Note that the $(A,G)$ $\alpha$-divergences (i.e., the ordinary $\alpha$-divergences) is a proper divergence satisfying both the properties D1 and D2 because $f_A(u)=u$ and $f_G(u)=\log u$, and hence $(f_A\circ f_G^{-1})(u)=\exp(u)$ is strictly convex on $(0,\infty)$.

Let us denote by $I_\alpha^{f,g}[p:q]\eqdef I_\alpha^{M^f,M^g}[p:q]$ the {\em quasi-arithmetic $\alpha$-divergences}.
Since the QAMs are symmetric means, we have $I_\alpha^{f,g}[p:q]=I_{1-\alpha}^{f,g}[q:p]$.

\subsection{Limit cases of $1$-divergences and $0$-divergences}\label{sec:limitdiv}
We seek a closed-form formula of the limit divergence $\lim_{\alpha\rightarrow 0} I_\alpha^{f,g}[p:q]$ when $\alpha\rightarrow 0$.

\begin{lemma}
A first-order Taylor approximation of the quasi-arithmetic mean~\cite{MNskewJensenBregman-2017} $M_\alpha^f$ for a $C_1$ strictly increasing generator $f$ when $\alpha\simeq 0$ yields
\begin{equation}
M_\alpha^f(p:q) = p+\frac{\alpha(f(q)-f(p))}{f'(p)} + o(\alpha(f(q)-f(p))).
\end{equation}
\end{lemma}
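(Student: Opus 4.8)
The plan is to compute the Taylor expansion of $M_\alpha^f(p:q) = f^{-1}\left(f(p) + \alpha(f(q) - f(p))\right)$ in the variable $\alpha$ around $\alpha = 0$, using the formula from Eq.~\ref{eq:wqam}. Set $u(\alpha) := f(p) + \alpha(f(q) - f(p))$, so that $u(0) = f(p)$ and $M_\alpha^f(p:q) = f^{-1}(u(\alpha))$. First I would observe that $u'(\alpha) = f(q) - f(p)$ is constant in $\alpha$, so the only nonlinearity comes from $f^{-1}$.

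Next I would apply the chain rule: $\frac{d}{d\alpha} f^{-1}(u(\alpha)) = (f^{-1})'(u(\alpha)) \cdot (f(q) - f(p))$. Evaluating at $\alpha = 0$ gives $(f^{-1})'(f(p)) \cdot (f(q) - f(p))$, and by the inverse function theorem $(f^{-1})'(f(p)) = \frac{1}{f'(p)}$ (here strict monotonicity and $f \in C^1$ guarantee $f'(p) \neq 0$ and that $f^{-1}$ is differentiable at $f(p)$). Hence the first-order Taylor expansion reads
\begin{equation}
M_\alpha^f(p:q) = f^{-1}(f(p)) + \alpha \cdot \frac{f(q)-f(p)}{f'(p)} + o(\alpha),
\end{equation}
and since $f^{-1}(f(p)) = p$ this is exactly the claimed formula, once one notes that the remainder term can be written as $o(\alpha(f(q)-f(p)))$ — indeed when $f(q) \neq f(p)$ the factor $f(q)-f(p)$ is a nonzero constant so $o(\alpha) = o(\alpha(f(q)-f(p)))$, while when $f(q) = f(p)$ (i.e. $p=q$ by strict monotonicity) both sides are identically $p$ and the statement is trivial.

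The only subtlety — and the step I would be most careful about — is the differentiability of $f^{-1}$: the hypothesis only states $f$ is $C^1$ and strictly increasing, so a priori $f'(p)$ could vanish, in which case $f^{-1}$ need not be differentiable at $f(p)$. I would either assume implicitly (as is standard for quasi-arithmetic mean generators, cf.~\cite{MNskewJensenBregman-2017}) that $f' > 0$ on the interior, or note that on $(0,\infty)$ strict monotonicity together with the intended use forces this. Given $f'(p) > 0$, the inverse function theorem immediately yields $f^{-1} \in C^1$ near $f(p)$ with derivative $1/f'(p)$, and the expansion follows. No genuinely hard estimate is required; this is a one-line Taylor expansion dressed up, and the proof is essentially the chain rule plus the inverse function theorem.
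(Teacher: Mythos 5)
Your proof is correct and follows essentially the same route as the paper: a first-order Taylor expansion of $f^{-1}$ at $f(p)$ combined with the inverse-function derivative formula $(f^{-1})'(f(p))=\frac{1}{f'(p)}$. The only cosmetic difference is that you expand in the variable $\alpha$ (obtaining a remainder $o(\alpha)$ which you then convert to $o(\alpha(f(q)-f(p)))$), whereas the paper expands $f^{-1}(x)$ in $x$ around $x_0=f(p)$ with increment $x-x_0=\alpha(f(q)-f(p))$, so the remainder arrives directly in the stated form; your remark that one must implicitly assume $f'>0$ is a fair point that the paper also leaves tacit.
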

 
\begin{proof} 
By taking the first-order Taylor expansion of $f^{-1}(x)$ at $x_0$ (i.e., Taylor polynomial of order $1$), we get:
\begin{equation}
f^{-1}(x) = f^{-1}(x_0) + (x-x_0) (f^{-1})'(x_0) + o(x-x_0).
\end{equation}
Using the property of the derivative of an inverse function:
\begin{equation}
(f^{-1})'(x)=\frac{1}{(f'(f^{-1})(x))},
\end{equation} 
it follows that the first-order Taylor expansion of $f^{-1}(x)$ is:
\begin{equation}
f^{-1}(x) = f^{-1}(x_0)+ (x-x_0) \frac{1}{(f'(f^{-1})(x_0))} + o(x-x_0).
\end{equation}

Plugging $x_0=f(p)$ and $x=f(p)+\alpha(f(q)-f(p))$, we get a {\em first-order approximation} of the 
weighted quasi-arithmetic mean $M_\alpha^f$ when $\alpha\rightarrow 0$:

\begin{equation}
M_\alpha^f(p,q)  =  p +  \frac{\alpha(f(q)-f(p))}{f'(p)} + o(\alpha(f(q)-f(p))).
\end{equation}

\end{proof}

Let us introduce the following bivariate function:
\begin{equation}\label{eq:Ef}
E_f(p,q) \eqdef \frac{f(q)-f(p)}{f'(p)}.
\end{equation}

Thus we obtain closed-form formula for the $I_1$-divergence and $I_0$-divergence:

\begin{theorem}[Quasi-arithmetic $I_1$-divergence and $I_0$-divergence]
The quasi-arithmetic $I_1$-divergence induced by two strictly increasing and differentiable functions $f$ and $g$ such that $f\circ g^{-1}$ is strictly convex is
\begin{eqnarray}
I_1^{f,g}[p:q]=\lim_{\alpha\rightarrow 1} I_\alpha^{f,g}[p:q)] &=&
\int \left( E_f(p,q) -   E_g(p,q) \right)\dmu\geq 0,\label{eq:I1gen}\\
&=& \int \left( \frac{f(q)-f(p)}{f'(p)} -  \frac{g(q)-g(p)}{g'(p)}\right)\dmu.
\end{eqnarray}
We have $I_0^{f,g}[p:q]=I_1^{f,g}[q:p]$.
\end{theorem}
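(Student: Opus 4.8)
The plan is to compute the limit $\lim_{\alpha\to 1} I_\alpha^{f,g}[p:q]$ directly from Definition~\ref{def:genalphadiv} by substituting the first-order Taylor approximations of the weighted quasi-arithmetic means. Recall that $I_\alpha^{f,g}[p:q] = \frac{1}{\alpha(1-\alpha)}\int\left(M_{1-\alpha}^f(p:q)-M_{1-\alpha}^g(p:q)\right)\dmu$, so the relevant regime is $1-\alpha\to 0$, i.e.\ the small-weight expansion proved in the Lemma applies with the weight parameter $1-\alpha$. Writing $\beta = 1-\alpha$, we have $M_\beta^f(p,q) = p + \beta E_f(p,q) + o(\beta)$ and likewise for $g$, so the integrand becomes $M_{\beta}^f(p:q)-M_{\beta}^g(p:q) = \beta\left(E_f(p,q)-E_g(p,q)\right) + o(\beta)$. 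Dividing by $\alpha(1-\alpha) = (1-\beta)\beta$ and letting $\beta\to 0$ (so $\alpha\to 1$) gives $\int\left(E_f(p,q)-E_g(p,q)\right)\dmu$, which is the claimed Eq.~\ref{eq:I1gen}; the second displayed line is just the definition of $E_f$ and $E_g$ from Eq.~\ref{eq:Ef} written out.

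The non-negativity $I_1^{f,g}[p:q]\geq 0$ should follow from the fact that $I_\alpha^{f,g}[p:q]\geq 0$ for every $\alpha\in(0,1)$ — which holds because $M^f$ and $M^g$ are strictly comparable with $M_\alpha^f\geq M_\alpha^g$ by Theorem~\ref{thm:compwm} — so the pointwise limit of non-negative quantities is non-negative. One may alternatively argue directly that $E_f(p,q)-E_g(p,q)\geq 0$ pointwise: setting $h = f\circ g^{-1}$ (strictly convex, strictly increasing) and $u = g(p)$, $v = g(q)$, one has $E_f(p,q) = \frac{h(v)-h(u)}{h'(u)}$ in the $g$-coordinates versus $E_g(p,q) = \frac{v-u}{1}$ after the substitution, and convexity of $h$ gives $h(v)-h(u)\geq h'(u)(v-u)$, i.e.\ $\frac{h(v)-h(u)}{h'(u)}\geq v-u$ since $h'(u)>0$. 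The identity $I_0^{f,g}[p:q]=I_1^{f,g}[q:p]$ is immediate from the reference duality $I_\alpha^{M,N}[p:q]=I_{1-\alpha}^{M,N}[q:p]$ for symmetric means (QAMs are symmetric), taking the limit $\alpha\to 1$ on one side and $\alpha\to 0$ on the other.

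The main obstacle is the interchange of limit and integral: the Lemma gives only a pointwise first-order expansion in $\alpha$, with an $o(\alpha(f(q)-f(p)))$ remainder that depends on $x$ through $p(x)$ and $q(x)$, so passing $\lim_{\alpha\to 1}$ inside $\int\,\dmu$ requires a dominated-convergence or monotone-convergence justification. The cleanest route is monotonicity: since for QAMs $M^f\geq M^g$ implies $M_\alpha^f\geq M_\alpha^g$ for all $\alpha$, and moreover the family $\alpha\mapsto I_\alpha^{f,g}[p:q]$ is expected to be monotone as $\alpha\to 1$ (the integrand $\frac{M_{1-\alpha}^f(p:q)-M_{1-\alpha}^g(p:q)}{\alpha(1-\alpha)}$ is non-negative and one can check it is monotone in $\alpha$ for fixed $p,q$ using convexity of $f\circ g^{-1}$), one can invoke the monotone convergence theorem. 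If monotonicity in $\alpha$ is awkward to verify cleanly, the fallback is to note that the scalar integrand converges pointwise and is dominated — under mild regularity on $f,g$ near the relevant range of values — by an integrable function, so dominated convergence applies. I would state this interchange explicitly and either cite the monotonicity or impose the integrability hypothesis needed to legitimize it.
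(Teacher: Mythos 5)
Your proposal is correct and follows essentially the same route as the paper: the limit formula comes from the first-order Taylor expansion of $M^f_\beta$ and $M^g_\beta$ as the weight $\beta=1-\alpha\to 0$, and the pointwise inequality $E_f(p,q)\geq E_g(p,q)$ comes from convexity of $h=f\circ g^{-1}$ (the paper phrases it via the chordal slope lemma with a two-case analysis on $p<q$ versus $p>q$, whereas your single gradient inequality $h(v)-h(u)\geq h'(u)(v-u)$ handles both cases at once, which is slightly cleaner). One small slip: after substituting $u=g(p)$, $v=g(q)$ one actually gets $E_f(p,q)=\frac{h(v)-h(u)}{h'(u)\,g'(p)}$ and $E_g(p,q)=\frac{v-u}{g'(p)}$, not the expressions you wrote; the common positive factor $\frac{1}{g'(p)}$ cancels from the comparison, so your conclusion stands, but the factor should be acknowledged. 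Your point about justifying the interchange of limit and integral is well taken --- the paper itself passes over it silently --- so stating a dominated- or monotone-convergence hypothesis as you propose is a genuine (if modest) improvement in rigor rather than a deviation.
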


\begin{proof}
Let us prove that $I_1^{f,g}$  is a proper divergence satisfying axioms D1 and D2.
Note that a sufficient condition for $I_1^{f,g}[p:q]\geq 0$ is to check that
\begin{eqnarray}
E_f(p,q) &\geq& E_g(p,q),\\
\frac{f(q)-f(p)}{f'(p)} &\geq& \frac{g(q)-g(p)}{g'(p)}.
\end{eqnarray}

If $p=q$ $\mu$-a.e. then clearly $I_1^{f,g}[p:q]=0$.
Consider $p\not=q$ (i.e., at some observation $x$: $p(x)\not=q(x)$).

We shall use the following property of a strictly convex and differentiable function $h$ for $x<y$ (sometimes called the chordal slope lemma, see~\cite{ConvexFunction-2006}):
\begin{equation}\label{eq:ineqtangent}
h'(x)  \leq \frac{h(y)-h(x)}{y-x} \leq h'(y).
\end{equation}

We consider $h(x)=(f\circ g^{-1})(x)$ so that $h'(x)=\frac{f'(g^{-1}(x))}{g'(g^{-1}(x))}$.
There are two cases to consider:
\begin{itemize}
\item $p<q$ and therefore $g(p)<g(q)$. Let $y=g(q)$ and $x=g(p)$ in Eq.~\ref{eq:ineqtangent}.
We have $h'(x)=\frac{f'(p)}{g'(p)}$ and $h'(y)=\frac{f'(q)}{g'(q)}$, and the double inequality of Eq.~\ref{eq:ineqtangent} becomes
$$
\frac{f'(p)}{g'(p)} \leq \frac{f(q)-f(p)}{g(q)-g(p)} \leq \frac{f'(q)}{g'(q)}.
$$
Since $g(q)-g(p)>0$ and $g'(p)>0$ and $f'(p)>0$, we get
$$
\frac{g(q)-g(p)}{g'(p)} \leq \frac{f(q)-f(p)}{f'(p)}. 
$$

\item $q<p$ and therefore $g(p)>g(q)$.
Then the double inequality of Eq.~\ref{eq:ineqtangent} becomes
$$
\frac{f'(q)}{g'(q)} \leq \frac{f(q)-f(p)}{g(q)-g(p)} \leq \frac{f'(p)}{g'(p)}
$$
That is,
$$
\frac{f(q)-f(p)}{f'(p)} \geq \frac{g(q)-g(p)}{g'(p)},
$$
since $g(q)-g(p)<0$.
\end{itemize}
Thus in both cases, we checked that $E_f(p(x),q(x)) \geq E_g(p(x),q(x))$.
Therefore $I_1^{f,g}[p:q]\geq 0$ and since the QAMs are distinct $I_1^{f,g}[p:q]=0$ iff $p(x)=q(x)$ $\mu$-a.e.
\end{proof}

We can interpret the $I_1$ divergences as generalized KL divergences, and define generalized notions of cross-entropies and entropies.
Since the KL divergence can be written as the cross-entropy minus the entropy, we can also decompose the 
$I_1$ divergences as follows:

\begin{eqnarray}
I_1^{f,g}[p:q] &=& \int \left( \frac{f(q)}{f'(p)} - \frac{g(q)}{g'(p)} \right) \dmu - \int \left( \frac{f(p)}{f'(p)} - \frac{g(p)}{g'(p)} \right)\dmu,\\
&=& h^{f,g}_\times(p:q) - h^{f,g}(p),
\end{eqnarray}
where $h^{f,g}_\times(p:q)$ denotes the {\em $(f,g)$-cross-entropy} (for a constant $c\in\bbR$):
\begin{equation}
h^{f,g}_\times(p:q)=\int \left(\frac{f(q)}{f'(p)} - \frac{g(q)}{g'(p)}\right)\dmu + c,
\end{equation}
and $h^{f,g}(p)$ stands for the {\em $(f,g)$-entropy} (self cross-entropy):
\begin{equation}
h^{f,g}(p)=h^{f,g}_\times(p:p)=\int \left( \frac{f(p)}{f'(p)} - \frac{g(p)}{g'(p)} \right) \dmu + c.
\end{equation}

We define the  generalized {\em $(f,g)$-Kullback-Leibler divergence}:
\begin{equation}
\KL_{f,g}[p:q] := h^{f,g}_\times(p:q) - h^{f,g}(p).
\end{equation}

When $f=f_A$ and $g=f_G$, we resolve the constant to $c=0$, and recover the ordinary Shannon cross-entropy and entropy:
\begin{eqnarray}
h^{f_A,f_G}_\times(p:q) &=& \int (q - p\log q)\dmu = h_\times(p:q),\\
h^{f_A,f_G}(p:q) &=& h^{f_A,f_G}_\times(p:p)= \int (p-p\log p)\dmu = h(p),
\end{eqnarray}
and we have the $(f_A,f_G)$-Kullback-Leibler divergence that is the extended Kullback-Leibler divergence:
\begin{equation}
\KL_{f_A,f_G}[p:q] = \KL_e[p:q]=h_\times(p:q)-h(p)=\int (p\log\frac{p}{q}+q-p) \dmu.
\end{equation}

Thus we have the $(f,g)$-cross-entropy and $(f,g)$-entropy expressed as
\begin{eqnarray}
h^{f,g}_\times(p:q) &=& \int \left(\frac{f(q)}{f'(p)} - \frac{g(q)}{g'(p)}\right)\dmu,\\
h^{f,g}(p) &=&  \int \left( \frac{f(p)}{f'(p)} - \frac{g(p)}{g'(p)} \right) \dmu.
\end{eqnarray}

In general, we can define the $(f,g)$-Jeffreys' divergence as:
\begin{eqnarray}
J^{f,g}[p:q] &=& \KL^{f,g}[p:q] + \KL^{f,g}[q:p].
\end{eqnarray}

Thus we define the quasi-arithmetic mean $\alpha$-divergences as follows:

\begin{theorem}[Quasi-arithmetic $\alpha$-divergences]\label{thm:qamalpha} 
Let $f$ and $g$ be two strictly continuously increasing and differentiable functions on $(0,\infty)$ such that $f\circ g^{-1}$ is strictly convex.
Then the quasi-arithmetic $\alpha$-divergences induced by  $(f,g)$ for $\alpha\in [0,1]$ is
\begin{equation}
I_\alpha^{f,g}[p:q]=\left\{
\begin{array}{ll}
\frac{1}{\alpha(1-\alpha)}\int \left(M_{1-\alpha}^f(p:q)-M^g_{1-\alpha}(p:q)\right)\dmu, & \alpha\in\bbR\backslash\{0,1\}.\\
I_1^{f,g}[p:q]=\int \left( \frac{f(q)-f(p)}{f'(p)} -   \frac{g(q)-g(p)}{g'(p)} \right)\dmu & \alpha=1,\\
I_0^{f,g}[p:q]=I_1^{f,g}(q:p)=\int \left( \frac{f(p)-f(q)}{f'(q)} -   \frac{g(p)-g(q)}{g'(q)} \right)\dmu,& \alpha=0.
\end{array}
\right.
\end{equation}
\end{theorem}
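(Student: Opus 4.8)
The plan is to assemble the statement from three ingredients already in hand: Definition~\ref{def:genalphadiv}, Theorem~\ref{thm:compwm}, and the limit formula for $I_1^{f,g}$ together with the reference duality for symmetric means. There is essentially no new computation; the work is in recording why axioms D1 and D2 hold on all of $[0,1]$.

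For $\alpha\in(0,1)$ the formula is just the definition: $I_\alpha^{f,g}[p:q]=I_\alpha^{M^f,M^g}[p:q]$, and Definition~\ref{def:genalphadiv} gives exactly $\frac{1}{\alpha(1-\alpha)}\int\bigl(M_{1-\alpha}^f(p:q)-M_{1-\alpha}^g(p:q)\bigr)\dmu$. To see it is a proper divergence, note that strict convexity of $f\circ g^{-1}$ lets us invoke Theorem~\ref{thm:compwm}: $(M^f,M^g)$ is strictly comparable with $M^f_\lambda\geq M^g_\lambda$ for every $\lambda\in(0,1)$ and equality only when the arguments coincide. Hence the integrand $M_{1-\alpha}^f(p(x):q(x))-M_{1-\alpha}^g(p(x):q(x))$ is pointwise nonnegative, the prefactor $\frac{1}{\alpha(1-\alpha)}$ is positive on $(0,1)$, so $I_\alpha^{f,g}[p:q]\geq 0$ (D1), with equality iff $p=q$ $\mu$-almost everywhere (D2); symmetry of QAMs gives the reference duality $I_\alpha^{f,g}[p:q]=I_{1-\alpha}^{f,g}[q:p]$ already noted.

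The case $\alpha=1$ is exactly the quasi-arithmetic $I_1$-divergence theorem proved above: substituting the first-order expansion $M_{1-\alpha}^f(p:q)=p+(1-\alpha)E_f(p,q)+o(1-\alpha)$ (the Taylor lemma with small parameter $1-\alpha\to 0$) into the $\alpha\in(0,1)$ formula, the factor $(1-\alpha)$ from the expansion cancels against $\frac{1}{1-\alpha}$ while $\alpha\to 1$, and the pointwise limit of the integrand is $E_f(p,q)-E_g(p,q)=\frac{f(q)-f(p)}{f'(p)}-\frac{g(q)-g(p)}{g'(p)}$; the nonnegativity and identity of indiscernibles come from the chordal-slope inequality applied to $h=f\circ g^{-1}$, as done there. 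For $\alpha=0$, symmetry of QAMs gives $I_\alpha^{f,g}[p:q]=I_{1-\alpha}^{f,g}[q:p]$, so $I_0^{f,g}[p:q]=\lim_{\alpha\to 0}I_\alpha^{f,g}[p:q]=I_1^{f,g}[q:p]$; substituting $p\leftrightarrow q$ in the $I_1$ formula yields $\int\bigl(\frac{f(p)-f(q)}{f'(q)}-\frac{g(p)-g(q)}{g'(q)}\bigr)\dmu$.

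The only genuinely delicate point is interchanging the limit $\alpha\to 1$ (resp.\ $\alpha\to 0$) with the integral over $\calX$; this is dispatched in the $I_1$-divergence theorem above (the integrand stays pointwise nonnegative and is suitably dominated, so monotone/dominated convergence applies), and once it is granted, the present theorem is pure bookkeeping---collecting the three regimes and checking that D1 and D2 hold throughout $[0,1]$.
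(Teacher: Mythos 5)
Your proposal is correct and follows essentially the same route as the paper, which presents this theorem as a compilation of Definition~\ref{def:genalphadiv} (the $\alpha\in(0,1)$ case), the strict-comparability result of Theorem~\ref{thm:compwm} (giving axioms D1 and D2), and the $I_1$/$I_0$ limit theorem obtained from the first-order Taylor expansion of $f^{-1}$ and the chordal-slope inequality for $f\circ g^{-1}$. Your closing remark about justifying the interchange of the limit $\alpha\to 1$ with the integral is in fact slightly more careful than the paper, which passes to the pointwise limit without comment.
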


When $f(u)=f_A(u)=u$ ($M^f=A$) and $g(u)=f_G(u)=\log u$ ($M^g=G$), we get:
\begin{equation}
I_1^{A,G}[p:q]=\int \left( q-p-p\log\frac{q}{p}\right) \dmu=\KL_e[p:q]=I_1[p:q],
\end{equation}
the Kullback-Leibler divergence (KLD) extended to positive densities, and $I_0=\KL_e^*$ the reverse extended KLD.

We can rewrite the $\alpha$-divergence $I_\alpha^{f,g}[p:q]$ for $\alpha\in (0,1)$ as
\begin{equation}
I_\alpha^{f,g}[p:q]=
\frac{1}{\alpha(1-\alpha)}  \left(S_{1-\alpha}^f(p:q)-S^g_{1-\alpha}(p:q)\right), 
\end{equation}
where
\begin{equation}
S_\lambda^h(p:q) := \int M^h_\lambda(p:q) \dmu.
\end{equation}

Zhang~\cite{Zhang-2004} (p. 188-189) considered the $(A,M^\rho)$ $\alpha_A$-divergences:
\begin{equation}\label{eq:ZhangAlpha}
D^{\rho}_{\alpha}[p:q]= \frac{4}{1-\alpha^{2}} \int\left(\frac{1-\alpha}{2} p+\frac{1+\alpha}{2} q-\rho^{-1}\left(\frac{1-\alpha}{2} \rho(p)+\frac{1+\alpha}{2} \rho(q)\right)\right)  \dmu.
\end{equation}
The formula he obtained for $D^{\rho}_{\pm 1}(p:q)$:
\begin{equation}
{D}^{\rho}_{1}[p:q]=\int\left(p-q-\left(\rho^{-1}\right)^{\prime}(\rho(q))(\rho(p)-\rho(q))\right) \dmu= {D}^{\rho}_{-1}[q:p],
\end{equation}
 is in accordance with our generic formula of Eq.~\ref{eq:I1gen} since $(\rho^{-1}(x))'=\frac{1}{\rho'(\rho^{-1}(x))}$.
Notice that $A_\alpha\geq P^r_\alpha$ for $r\leq 1$: The arithmetic weighted mean dominates the weighted power means $P^r$ when $r\leq 1$.

Furthermore, by imposing the homogeneity condition $I_\alpha^{A,M^\rho}[\lambda p:\lambda q]=\lambda I_\alpha^{A,M^\rho}[p:q]$  for $\lambda>0$,  Zhang~\cite{Zhang-2004} obtained the class of the $(\alpha_A,\beta_A)$-divergences for $(\alpha_A,\beta_A)\in [-1,1]^2$:
\begin{equation}
D_{\alpha_A, \beta_A}[p:q] = \frac{4}{1-\alpha_A^{2}} \frac{2}{1+\beta_A} \int\left(\frac{1-\alpha_A}{2} p+\frac{1+\alpha_A}{2} q-\left(\frac{1-\alpha_A}{2} p^{\frac{1-\beta_A}{2}}+\frac{1+\alpha_A}{2} q^{\frac{1-\beta_A}{2}}\right)^{\frac{2}{1-\beta_A}}\right) \dmu.
\end{equation}

\subsection{Generalized KL divergences as conformal Bregman divergences on monotone embeddings}\label{sec:repBreg}

We can rewrite the generalized KLDs $I^{f,g}_1$ as a {\em conformal Bregman representational divergence}~\cite{NNconformal-2015,Ohara-2018,Ohara-2019} as  follows:

\begin{theorem}
The generalized KLDs $I^{f,g}_1$ divergences are conformal Bregman representational divergences:
\begin{equation}
I_1^{f,g}[p:q] = \int \frac{1}{f'(p)} B_F(g(q):g(p)) \dmu,
\end{equation}
with $F=f\circ g^{-1}$ a strictly convex and differentiable Bregman convex generator.
\end{theorem}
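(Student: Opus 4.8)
The plan is to unfold the definition of the Bregman divergence $B_F$ and the composite generator $F=f\circ g^{-1}$, and then match the integrand term by term against the closed-form expression for $I_1^{f,g}$ established in Eq.~\ref{eq:I1gen}. Recall that for a strictly convex differentiable generator $F$ the Bregman divergence is
\[
B_F(\theta_1:\theta_2) = F(\theta_1) - F(\theta_2) - (\theta_1-\theta_2)\,F'(\theta_2).
\]
First I would substitute $\theta_1 = g(q)$ and $\theta_2 = g(p)$ and use $F = f\circ g^{-1}$ to simplify the ``evaluation'' terms: since $F(g(q)) = f(q)$ and $F(g(p)) = f(p)$, the first two terms collapse to $f(q)-f(p)$.

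Next I would handle the derivative term. Using the inverse-function derivative rule $(g^{-1})'(x) = 1/g'(g^{-1}(x))$ together with the chain rule, $F'(x) = f'(g^{-1}(x))/g'(g^{-1}(x))$, so that $F'(g(p)) = f'(p)/g'(p)$. Plugging this in gives
\[
B_F(g(q):g(p)) = f(q)-f(p) - \bigl(g(q)-g(p)\bigr)\frac{f'(p)}{g'(p)}.
\]
Dividing by $f'(p)$ (which is positive since $f$ is strictly increasing and differentiable) yields exactly
\[
\frac{1}{f'(p)}B_F(g(q):g(p)) = \frac{f(q)-f(p)}{f'(p)} - \frac{g(q)-g(p)}{g'(p)} = E_f(p,q) - E_g(p,q),
\]
recalling the notation $E_f$ from Eq.~\ref{eq:Ef}. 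Integrating against $\dmu$ and invoking Eq.~\ref{eq:I1gen} gives the claim. I would also remark that $F = f\circ g^{-1}$ is strictly convex and differentiable by hypothesis (this is the standing assumption of the quasi-arithmetic $\alpha$-divergences and of Theorem~\ref{thm:compwm}), so $B_F$ is a bona fide Bregman divergence and the conformal factor $1/f'(p)$ is well-defined and positive, which incidentally re-derives the nonnegativity $I_1^{f,g}[p:q]\geq 0$ and the identity of indiscernibles.

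There is essentially no serious obstacle here: the argument is a direct substitution plus the inverse-function rule. The only point requiring a word of care is making sure the domains line up — namely that $g(p), g(q)$ lie in the domain $J = (0,\infty)$-image where $F = f\circ g^{-1}$ is defined and strictly convex — but this is immediate from $p,q\in(0,\infty)$ and the standing assumptions on $f,g$. The conformal interpretation (that $1/f'(p)$ depends only on the ``source'' point $p$, so this is a conformal deformation of a Bregman divergence in the representation $\theta = g(\cdot)$) is then a matter of exposition rather than proof.
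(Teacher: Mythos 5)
Your proposal is correct and follows essentially the same route as the paper's proof: expand $B_F(g(q):g(p))$ for $F=f\circ g^{-1}$, use $F'(g(p))=f'(p)/g'(p)$ to collapse the integrand to $E_f(p,q)-E_g(p,q)$, and integrate to recover Eq.~\ref{eq:I1gen}. The remarks on positivity of the conformal factor and on domains match the paper's concluding observation that $I_1^{f,g}\geq 0$ follows from $f'(p)>0$ and $B_F\geq 0$.
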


\begin{proof}
For the Bregman strictly convex and differentiable generator $F=f\circ g^{-1}$, we expand the following conformal divergence:
\begin{eqnarray}
\frac{1}{f'(p)} B_F(g(q):g(p)) &=& \frac{1}{f'(p)} \left( F(g(q)) - F(g(p)) - (g(q)-g(p))F'(g(p))  \right),\\
&=& \frac{1}{f'(p)} \left( (f(q)-f(p)) - (g(q)-g(p))\frac{f'(p)}{g'(p)}  \right),
\end{eqnarray}
since $(g^{-1}\circ g)(x)=x$ and $F'(g(x))=\frac{f'(x)}{g'(x)}$.
It follows that
\begin{eqnarray}
\frac{1}{f'(p)} B_F(g(q):g(p)) &=& \frac{f(q)-f(p)}{f'(p)} - \frac{g(q)-g(p)}{g'(p)},\\
&=& E_f(p,q)-E_g(p,q) = I_1^{f,g}[p:q].
\end{eqnarray}
Hence, we easily check that $I_1^{f,g}[p:q]=\int \frac{1}{f'(p)} B_F(g(q):g(p)) \dmu\geq 0$ since $f'(p)>0$ and $B_F\geq 0$.
\end{proof}

In general, for a functional generator $f$ and a strictly monotonic representational function $r$, we can define the representational Bregman divergence~\cite{repBreg-2009} $B_{f\circ r^{-1}}(r(p):r(q))$ provided that $F=f\circ r^{-1}$ is a Bregman generator (i.e., strictly convex and differentiable).

In~\cite{MNskewJensenBregman-2017}, a generalization of the Bregman divergences was obtained using the {\em comparative convexity} induced by two abstract means $M$ and $N$ to define $(M,N)$-Bregman divergences as limit of scaled  $(M,N)$-Jensen divergences. 
The skew $(M,N)$-Jensen divergences are defined for $\alpha\in (0,1)$ by:
\begin{equation}
J_{F,\alpha}^{M,N}(p:q) = \frac{1}{\alpha(1-\alpha)} \left( N_\alpha(F(p),F(q)))-F(M_\alpha(p,q)) \right),
\end{equation}
where $M_\alpha$ and $N_\alpha$ are weighted means that should be {\em regular}~\cite{MNskewJensenBregman-2017} (i.e., homogeneous, symmetric, continuous and increasing in each variable).
Then we can define the $(M,N)$-Bregman divergence as
\begin{eqnarray*}
B_{F}^{M,N}[p:q] &=& \lim_{\alpha\rightarrow 1^-}    J_{F,\alpha}^{M,N}(p:q),\\
&=& \lim_{\alpha\rightarrow 1^-}  \frac{1}{\alpha(1-\alpha)} \left( N_\alpha(F(p),F(q)))-F(M_\alpha(p,q)) \right).
\end{eqnarray*}

The formula obtained in~\cite{MNskewJensenBregman-2017} for the quasi-arithmetic means $M^f$ and $M^g$ and a functional  generator $F$ that is $(M^f,M^g)$-convex is:
\begin{eqnarray}
B^{f,g}_F(p:q) 
&=&  
\frac{g(F(p))-g(F(q))}{g'(F(q))} - \frac{f(p)-f(q)}{f'(q)} F'(q),\\
&=& \frac{1}{f'(F(q))} B_{g\circ F\circ f^{-1}}(f(p):f(q)) \geq 0.
\end{eqnarray}

This is a conformal divergence~\cite{NNconformal-2015} that can be written using the $E_f$ terms as:
\begin{equation}
B^{f,g}_F(p:q) = E_g(F(q),F(p))-E_f(q,p)F'(q).
\end{equation}

A function $F$ is $(M^f,M^g)$-convex iff $g\circ F\circ f^{-1}$ is (ordinary) convex~\cite{MNskewJensenBregman-2017}.

The information geometry induced by a Bregman divergence (or equivalently by its convex generator) is a dually flat space~\cite{IG-2016,EIG-2018}.
The dualistic structure induced by a conformal Bregman representational divergence is related to conformal flattening~\cite{Ohara-2018,Ohara-2019}.

Following the work of Ohara~\cite{Ohara-2018,Ohara-2019}, the {\em geometric divergence} $\rho(p,r)$ (a contrast function in affine differential geometry)  induced by a pair $(L,M)$ of strictly monotone smooth functions 
between two distributions $p$ and $r$ of the $d$-dimensional probability simplex $\Delta_d$ is defined by (Eq.~28 in~\cite{Ohara-2018}):
\begin{equation}
\rho(p:r) =  \frac{1}{\Lambda(r)} \sum_{i=1}^{d+1} \frac{L(p_i)-L(r_i)}{L'(r_i)} =  \frac{1}{\Lambda(r)} \sum_{i=1}^{d+1} E_L(r_i,p_i),
\end{equation}
where $\Lambda(r)= \sum_{i=1}^{d+1} \frac{1}{L'(p_i)} p_i$. 
Affine immersions~\cite{Kurose-1994} can be interpreted as special embeddings.

Let $\rho$ be a divergence (contrast function) and $(\leftsup{\rho}g,\leftsup{\rho}\nabla,\leftsup{\rho}\nabla^*)$ be the induced statistical manifold structure
with
\begin{eqnarray}
\leftsup{\rho}g_{ij}(p) &\eqdef& -(\partial_i)_p(\partial_j)_p\ \rho(p,q)|_{q=p},\\
\Gamma_{ij,k}(p) &\eqdef& -(\partial_i)_p(\partial_j)_p(\partial_k)_q\ \rho(p,q)|_{q=p},\\
\Gamma_{ij,k}^*(p) &\eqdef& -(\partial_i)_p(\partial_j)_q(\partial_k)_q\ \rho(p,q)|_{q=p},
\end{eqnarray}
where $(\partial_i)_s$ denotes the tangent vector at $s$ of a vector field $\partial_i$.

Consider a conformal divergence $\rho_\kappa(p:q)=\kappa(q)\rho(p:q)$ for a positive function $\kappa(q)>0$, called the conformal factor.
Then the induced statistical manifold~\cite{Eguchi-1992,IG-2016} $(\leftsup{\rho_\kappa}g,\leftsup{\rho_\kappa}\nabla,\leftsup{\rho_\kappa}\nabla^*)$ is $1$-conformally equivalent to $(\leftsup{\rho}g,\leftsup{\rho}\nabla,\leftsup{\rho}\nabla^*)$ and we have
\begin{eqnarray}
\leftsup{\rho_\kappa}g &=& \kappa\ \leftsup{\rho}g,\\
\leftsup{\rho}g(\leftsup{\rho_\kappa}\nabla_X Y,Z) &=& \leftsup{\rho}g(\leftsup{\rho}\nabla_X Y,Z) -d(\log \kappa)(Z)\leftsup{\rho}g(X,Y).
\end{eqnarray}
The dual affine connections $\leftsup{\rho_\kappa}\nabla^*$ and  $\leftsup{\rho}\nabla^*$ are projectively equivalent~\cite{Kurose-1994} (and $\leftsup{\rho}\nabla^*$ is said $-1$-conformally flat).

Conformal flattening~\cite{Ohara-2018,Ohara-2019} consists in choosing the conformal factor $\kappa$ such that $(\leftsup{\rho_\kappa}g,\leftsup{\rho_\kappa}\nabla,\leftsup{\rho_\kappa}\nabla)$ becomes a dually flat space~\cite{IG-2016} equipped with a canonical Bregman divergence.

Therefore it follows that the statistical manifolds induced by the $1$-divergence $I_1^{f,g}$ is a representational $1$-conformally flat statistical manifold.

\section{The subfamily of homogeneous $(r,s)$-power $\alpha$-divergences}\label{sec:rspower}

In particular, we can define the {\em $(r,s)$-power $\alpha$-divergences} 
from two power means $P_r=M^{\pow_r}$ and $P_s=M^{\pow_s}$ with $r>s$  (and $P_r\geq P_s$) with the family of generators $\pow_l(u)=u^l$.
Indeed, we check that $f_{rs}(u)\eqdef \pow_r\circ\pow_s^{-1}(u)=u^{\frac{r}{s}}$ is strictly convex on $(0,\infty)$ since $f_{rs}''(u)=\frac{r}{s}\left(\frac{r}{s}-1\right)u^{\frac{r}{s}-2}>0$ for $r>s$. Thus $P_r$ and $P_s$ are two QAMs which are both comparable and distinct.
Table~\ref{tab:E} lists the expressions of $E_r(p,q)\eqdef E_{\pow_r}(p,q)$ obtained from the power mean generators $\pow_r(u)=u^r$.

\begin{table}
$$
\begin{array}{l|l}
\mbox{Power mean} & E_r(p,q)\\ \hline 
P_r (r\in\bbR\backslash\{0\}) & \frac{q^r-p^r}{rp^{r-1}}\\ \hline
Q (r=2) & \frac{q^2-p^2}{2p}\\
A (r=1) & q-p\\
G (r=0) & p\log\frac{q}{p}\\
H (r=-1) & -p^2\left(\frac{1}{q}-\frac{1}{p}\right)= p-\frac{p^2}{q} \\ \hline
\end{array}
$$
\caption{Expressions of the terms $E_r$ for the family of power means $P_r$, $r\in\bbR$.\label{tab:E}}
\end{table}

We conclude with the definition of the {\em $(r,s)$-power $\alpha$-divergences}:

\begin{corollary}[power $\alpha$-divergences]\label{def:powerrsalphadiv}
Given $r>s$, the $\alpha$-power divergences are defined for $r>s$ and $r,s\not=0$ by
\begin{equation}
I_\alpha^{r,s}[p:q]=\left\{
\begin{array}{ll}
\frac{1}{\alpha(1-\alpha)}\int \left( (\alpha p^r+(1-\alpha) q^r)^{\frac{1}{r}} - (\alpha p^s+(1-\alpha) q^s)^{\frac{1}{s}}  \right)\dmu, & \alpha\in\bbR\backslash\{0,1\}.\\
I_1^{r,s}[p:q]=\int \left( \frac{q^r-p^r}{rp^{r-1}} - \frac{q^s-p^s}{sp^{s-1}} \right)\dmu & \alpha=1,\\
I_0^{r,s}[p:q]=I_1^{r,s}(q:p) & \alpha=0.
\end{array}
\right.
\end{equation}
\end{corollary}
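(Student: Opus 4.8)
The plan is to obtain Corollary~\ref{def:powerrsalphadiv} as a direct specialization of Theorem~\ref{thm:qamalpha}, so the work is really a matter of substituting the power-mean generators and simplifying. First I would set $f=\pow_r$ and $g=\pow_s$ with $f(u)=u^r$ and $g(u)=u^s$, and check the hypotheses of Theorem~\ref{thm:qamalpha}: both are strictly increasing and differentiable on $(0,\infty)$, and, as already noted just above the corollary, $f\circ g^{-1}(u)=u^{r/s}$ has second derivative $\frac{r}{s}(\frac{r}{s}-1)u^{r/s-2}>0$ for $r>s$, hence $f\circ g^{-1}$ is strictly convex. This is exactly the input Theorem~\ref{thm:qamalpha} requires, so nothing new needs to be proved here; I would just cite it.

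Next I would compute each of the three branches. For $\alpha\in(0,1)$, by Eq.~\ref{eq:wqam} the weighted QAM with generator $\pow_l$ is $M^{\pow_l}_{1-\alpha}(p,q)=\left((1-(1-\alpha))p^l+(1-\alpha)q^l\right)^{1/l}=(\alpha p^l+(1-\alpha)q^l)^{1/l}$, so plugging $l=r$ and $l=s$ into the general $\alpha\in(0,1)$ formula of Theorem~\ref{thm:qamalpha} yields the first line of the corollary verbatim. For $\alpha=1$, I use the general formula $I_1^{f,g}[p:q]=\int\left(\frac{f(q)-f(p)}{f'(p)}-\frac{g(q)-g(p)}{g'(p)}\right)\dmu$; since $f'(u)=ru^{r-1}$ and $g'(u)=su^{s-1}$, the integrand becomes $\frac{q^r-p^r}{rp^{r-1}}-\frac{q^s-p^s}{sp^{s-1}}$, i.e.\ $E_r(p,q)-E_s(p,q)$ with $E_r$ as tabulated in Table~\ref{tab:E}. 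The $\alpha=0$ branch is then obtained simply by the identity $I_0^{f,g}[p:q]=I_1^{f,g}[q:p]$ already established, i.e.\ swap $p\leftrightarrow q$ in the $\alpha=1$ expression.

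Finally I would remark that this is a bona fide divergence: since $r>s$, Theorem~\ref{thm:compwm} guarantees that $P_r$ and $P_s$ are strictly comparable QAMs with $P_r\geq P_s$, so Definition~\ref{def:genalphadiv} applies and $I_\alpha^{r,s}$ satisfies axioms D1 and D2, with $I_\alpha^{r,s}[p:q]=0$ iff $p=q$ $\mu$-almost everywhere. There is essentially no obstacle in this proof — everything reduces to substitution into results already proved — so the only thing requiring any care is the bookkeeping of the weight conventions (the $1-\alpha$ versus $\alpha$ in the weighted mean, and the $A_0(x,y)=x$ convention) to make sure the $\alpha p^r+(1-\alpha)q^r$ grouping comes out with the stated orientation; I would double-check that against Eq.~\ref{eq:wqam} and the $(A,G)$ special case where $I_\alpha^{A,G}[p:q]=I_\alpha[q:p]$.
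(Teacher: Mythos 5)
Your proof matches the paper's: the corollary is obtained exactly by specializing Theorem~\ref{thm:qamalpha} to $f=\pow_r$, $g=\pow_s$, checking strict convexity of $(\pow_r\circ\pow_s^{-1})(u)=u^{r/s}$, and reading off the $E_r$ terms of Table~\ref{tab:E}, with the weight bookkeeping you flag coming out as stated. (The one caveat, shared with the paper's own text, is that for $s<r<0$ both generators $u^r,u^s$ are \emph{decreasing} and $u^{r/s}$ is then concave, so one should first pass to the increasing generators $-u^r,-u^s$ before invoking Theorem~\ref{thm:compwm}.)
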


When $r=0$, we get the following power $\alpha$-divergences for $s<0$:
\begin{equation}
I_\alpha^{r,s}[p:q]=\left\{
\begin{array}{ll}
\frac{1}{\alpha(1-\alpha)}\int \left( p^\alpha q^{1-\alpha} - (\alpha p^s+(1-\alpha) q^s)^{\frac{1}{s}}  \right)\dmu, & \alpha\in\bbR\backslash\{0,1\}.\\
I_1^{r,s}[p:q]=\int \left(p\log\frac{q}{p} - \frac{q^s-p^s}{sp^{s-1}} \right)\dmu & \alpha=1,\\
I_0^{r,s}[p:q]=I_1^{r,s}[q:p] & \alpha=0.
\end{array}
\right.
\end{equation}

When $s=0$, we get the following power $\alpha$-divergences for $r>0$:
\begin{equation}
I_\alpha^{r,s}[p:q]=\left\{
\begin{array}{ll}
\frac{1}{\alpha(1-\alpha)}\int \left( (\alpha p^r+(1-\alpha) q^r)^{\frac{1}{r}} - p^\alpha q^{1-\alpha}  \right)\dmu, & \alpha\in\bbR\backslash\{0,1\}.\\
I_1^{r,s}[p:q]=\int \left( \frac{q^r-p^r}{rp^{r-1}} - p\log\frac{q}{p}  \right)\dmu & \alpha=1,\\
I_0^{r,s}[p:q]=I_1^{r,s}[q:p] & \alpha=0.
\end{array}
\right.
\end{equation}

In particular, we get the following family of {\em $(A,H)$ $\alpha$-divergences}:
\begin{equation}
I_\alpha^{A,H}[p:q]=I_\alpha^{1,-1}[p:q]=\left\{
\begin{array}{ll}
\frac{1}{\alpha(1-\alpha)}\int \left( \alpha p+(1-\alpha) q  -\frac{pq}{\alpha q+(1-\alpha) p}  \right)\dmu, & \alpha\in\bbR\backslash\{0,1\}.\\
I_1^{1,-1}[p:q]=\int \left( q-2p+\frac{p^2}{q}\right)\dmu & \alpha=1,\\
I_0^{1,-1}[p:q]=I_1^{1,-1}(q:p) & \alpha=0.
\end{array}
\right.,
\end{equation}
 and the family of {\em $(G,H)$ $\alpha$-divergences}: 
\begin{equation}
I_\alpha^{G,H}[p:q]=I_\alpha^{0,-1}(p:q)=\left\{
\begin{array}{ll}
\frac{1}{\alpha(1-\alpha)}\int \left( p^{\alpha}q^{1-\alpha}-\frac{pq}{\alpha q+(1-\alpha) p} \right)\dmu, & \alpha\in\bbR\backslash\{0,1\}.\\
I_1^{0,-1}[p:q]=\int \left(   p\log\frac{q}{p} -p + \frac{p^2}{q} \right)\dmu & \alpha=1,\\
I_0^{0,-1}[p:q]=I_1^{0,-1}[q:p] & \alpha=0.
\end{array}
\right..
\end{equation}

The $(r,s)$-power $\alpha$-divergences for $r,s\not=0$ yield homogeneous divergences: 
$I_\alpha^{r,s}[\lambda p:\lambda q]=\lambda I_\alpha^{r,s}[p:q]$ for any $\lambda>0$ because the power means are homogeneous:
 $P^r_\alpha(\lambda x,\lambda y)=\lambda P^r_\alpha(x,y)=\lambda x P^r_\alpha\left(1,\frac{y}{x}\right)$.
Thus the $I^{r,s}_\alpha$-divergences are Csisz\'ar $f$-divergences~\cite{Csiszar-1967}
\begin{equation}
I^{r,s}_\alpha[p:q] = \int p(x)f_{r,s}\left(\frac{q(x)}{p(x)}\right) \dmu
\end{equation}
 for the generator
\begin{equation}
f_{r,s}(u)=\frac{1}{\alpha(1-\alpha)}(P^r_\alpha(1,u)-P^s(1,u)).
\end{equation}

Thus the family of $(r,s)$-power $\alpha$-divergences are {\em homogeneous divergences}:
\begin{equation}
I^{r,s}_\alpha[\lambda p:\lambda q] = \lambda I^{r,s}_\alpha[p:q],\quad\forall \lambda>0
\end{equation}

\section{Conclusion, discussion and perspectives}\label{sec:concl}

For two comparable strict means $M\geq N$, one can define the {\em $(M,N)$-divergence} as
\begin{equation}
I^{M,N}[p:q] := 4\int \left(M(p,q)-N(p,q)\right) \dmu.
\end{equation}

When the property of strict comparable means extend to their induced weighted means $M_\alpha$ and $N_\alpha$ (i.e., $M_\alpha\geq N_\alpha$), one can further define the 
{\em $(M,N)$ $\alpha$-divergences} for $\alpha\in (0,1)$:
\begin{equation}
I^{M,N}_\alpha[p:q]:=\frac{1}{\alpha(1-\alpha)}\int \left(M_{1-\alpha}(p,q)-N_{1-\alpha}(p,q)\right) \dmu,
\end{equation}
so that $I^{M,N}[p:q] = I^{M,N}_{\frac{1}{2}}[p:q]$.
When the weighted means are symmetric,  the reference duality holds (i.e., $I^{M,N}_\alpha[q:p]=I^{M,N}_{1-\alpha}[p:q]$), 
and we can define the $(M,N)$-equivalent of the Kullback-Leibler divergence, i.e., the $(M,N)$ $1$-divergence, as the limit case  (when it exists): 
$I_1^{M,N}[p:q]=\lim_{\alpha\rightarrow 1 } I^{M,N}_\alpha[p:q]$.

We proved that the quasi-arithmetic weighted means $M_\alpha^f$ and  $M_\alpha^g$ are strictly comparable whenever $f\circ g^{-1}$ is {\em strictly} convex.
In the limit cases of $\alpha\rightarrow 0$ and  $\alpha\rightarrow 1$, we reported a closed-form formula for the equivalent of the Kullback-Leibler divergence and the reverse Kullback-Leibler divergence.
We  reported closed-form formula for the quasi-arithmetic $\alpha$-divergences $I^{f,g}_\alpha(p:q)$ for $\alpha\in [0,1]$ 
(Theorem~\ref{thm:qamalpha}),
 and for the subfamily of homogeneous $(r,s)$-power $\alpha$-divergences $I^{r,s}_\alpha(p:q)$ induced by power means (Corollary~\ref{def:powerrsalphadiv}). The ordinary $(A,G)$ $\alpha$-divergences, the $(A,H)$ $\alpha$-divergences and the  $(G,H)$ $\alpha$-divergences are examples of 
$(r,s)$-power $\alpha$-divergences for $(r,s)=(1,0)$, $(r,s)=(1,-1)$ and $(r,s)=(0,-1)$, respectively.

Generalized $\alpha$-divergences may prove useful in reporting closed-form formula between parametric densities:
For example, consider the ordinary $\alpha$-divergences between two scale Cauchy densities $p_1(x)=\frac{1}{\pi}\frac{s_1}{x^2+s_1^2}$ 
and $p_2(x)=\frac{1}{\pi}\frac{s_2}{x^2+s_2^2}$: There is no obvious closed-form for the ordinary
$\alpha$-divergences but we can report easily a closed-form for the $(A,H)$ $\alpha$-divergences following the calculus reported in~\cite{GenBhatPe-2014}:
\begin{eqnarray}
I_\alpha^{A,H}[p_1:p_2] &=&\frac{1}{\alpha(1-\alpha)} \left(1-\int H_{1-\alpha}(p_1(x),p_2(x))\dmu(x)\right),\\
&=& \frac{1}{\alpha(1-\alpha)} \left(1- \frac{s_1s_2}{(\alpha s_1+(1-\alpha) s_2)s_{1-\alpha}} \right),
\end{eqnarray}
with $s_{\alpha}=\sqrt{\frac{\alpha s_{1} s_{2}^{2}+(1-\alpha) s_{2} s_{1}^{2}}{\alpha s_{1}+(1-\alpha) s_{2}}}$.
 In general, instead of considering ordinary  $\alpha$-divergences in applications, one may consider the $(r,s)$-power $\alpha$-divergences, and tune the three parameters $(r,s,\alpha)$ according to the various tasks (say, by cross-validation in supervised machine learning tasks). We note that the quasi-arithmetic means have been recently considered by Eguchi et al.~\cite{Eguchi-2016} to define a novel non-parametric dualistic structure of information geometry via generalizations of the $e$-geodesics and the $m$-geodesics. 

The elucidation the {\em $(f,g)$ $\alpha$-geometry} of these generalized $\alpha$-divergences is left for future work.
For the limit cases of $\alpha\rightarrow 0$ or of $\alpha\rightarrow 1$, we proved that the limit KL-type divergences amount to a conformal Bregman divergence on a monotone embedding, and briefly showed the connection with conformal divergences and conformal flattening.
The geometry of conformal flattening~\cite{Ohara-2018,Ohara-2019} and the relationships with the $(\rho,\tau)$-monotone 
embeddings~\cite{naudts2018rho} shall be further studied.

 Applications of $(f,g)$ $\alpha$-divergences to center-based clustering~\cite{MixedAlphaClustering-2014} and to generalized $\alpha$-divergences in positive-definite matrix spaces~\cite{IG-2016} shall also be considered in future work.
The quasi-arithmetic weighted means are convex if and only if the generators are differentiable with positive first derivatives with corresponding functions $-E_f$ of Eq.~\ref{eq:Ef} convex (Theorem~4 of~\cite{convexQAM-2019}, i.e., convexity of the quasi-arithmetic weighted means does {\em not} depend on the weights).
For example, when both quasiarithmetic means are convex means, the quasi-arithmetic $\alpha$-divergence is the difference of two convex mean functions, and the $k$-means centroid computation amounts to solve a {\em Difference of Convex} (DC) program which can solved by the smooth DC Algorithm, DCA, called the Convex-ConCave Procedure~\cite{BR-2011}.
Similarly, when $\alpha\in\{0,1\}$, we get a DC program since $I_\alpha^{f,g}$ writes as a difference of convex functions.

\bibliographystyle{plain}
\bibliography{MNalphadivergenceBIB}

\end{document}